\newcommand{\pp}[0]{{\mathrm{p}}}
\newcommand{\po}[0]{{\mathrm{o}}}
\newcommand{\counter}{{\mathrm{c}}}
\newcommand{\Out}{{\mathrm{Out}}}
\newcommand{\true}{{\mathit{true}}}
\newcommand{\false}{{\mathit{false}}}
\newcommand{\lf}{\mathrm{L}}
\newcommand{\nlf}{\mathrm{NL}}
\newcommand{\f}[1]{[#1]_{\mathrm{AD}}}
\newcommand{\g}[1]{\llbracket#1\rrbracket_{\mathrm{AD}}}
\newcommand{\fp}[1]{[#1]_{\mathrm{G}}}
\newcommand{\gp}[1]{\llbracket#1\rrbracket_{\mathrm{G}}}
\newcommand{\terms}{T_{\Sigma}}
\newcommand{\sem}[1]{\sat(#1)}
\DeclareMathOperator{\win}{Sat}
\DeclareMathOperator{\sat}{sat}
\DeclareMathOperator{\type}{type}
\DeclareMathOperator{\out}{out}
\newcommand{\winp}{\win^\pp_{\beta}}
\newcommand{\wino}{\win^\po_{\beta}}
\newcommand{\winpa}{\win^\pp_{\g{\sigma}}}
\newcommand{\winpp}{\win^\pp_{\beta^\pp}}
\newcommand{\winoo}{\win^\po_{\beta^\po}}
\newcommand{\winppa}{\win^\pp_{\g{\sigma^\pp}}}
\newcommand{\winpx}{\win^\pp}
\title{Attack--Defense Trees and Two-Player Binary Zero-Sum Extensive Form Games
Are Equivalent -- Technical Report with
Proofs\thanks{\href{http://dx.doi.org/10.1007/978-3-642-17197-0_17}{The original
publication is available at www.springerlink.com.}}}
\author{Barbara Kordy\thanks{B. Kordy was supported by the grant No. C08/IS/26 from FNR Luxembourg.}, Sjouke Mauw, Matthijs Melissen\thanks{M. Melissen was supported by the grant No. PHD--09--082 from FNR Luxembourg.}, Patrick Schweitzer\thanks{P. Schweitzer was supported by the grant No. PHD--09--167 from FNR Luxembourg.}}
\institute{University of Luxembourg}
\date{\today}
\begin{document}

\maketitle

\begin{abstract}
Attack--defense trees are used 
to describe  security
weaknesses of a system and possible countermeasures. 
In this paper, the connection between
attack--defense trees and game theory is made explicit. 
We show that attack--defense trees and binary zero-sum two-player
extensive form games have equivalent expressive power when considering 
satisfiability,
in the sense that they can be converted into each other while preserving their
outcome and  their internal structure.
\end{abstract}

\section{Introduction}

Attack trees~\cite{Schn2}, as popularized by Bruce Schneier at the end of the
1990s, form an informal but powerful method
to describe possible security weaknesses of a system. An attack tree
basically consists of a description of an attacker's goal and its
refinement into sub-goals. In case of a \emph{conjunctive} refinement,
all sub-goals have to be satisfied to satisfy the overall goal, while
for a \emph{disjunctive} refinement satisfying any of the sub-goals is
sufficient to satisfy the overall goal. The non-refined nodes (i.e.,
the leaves of the tree) are basic attack actions from which complex
attacks are composed.

Due to their intuitive nature, attack trees prove to be very useful in
understanding a system's weaknesses in an informal and interdisciplinary
context. The development of an attack tree for a specific system may
start by building a small tree that is obviously incomplete and
describes the attacks at a high level of abstraction, while allowing to
refine these attacks and to add new attacks later as to make a more
complete description.
Over the last few years, attack trees have developed into an
even more versatile tool. This is due to two developments. The first
development consists of the formalization of the attack trees
method~\cite{MaOo} which provides an attack tree with a precise meaning.
As a consequence, formal analysis techniques were
designed~\cite{WiJu,ReStFu} 
and computer tools were made commercially available~\cite{Program1,Program2}.

The second development comes from the insight that a more complete
description can be achieved by modeling the activities of a system's
defender in addition to those of the attacker.
Consequently, one can analyze which set of defenses is optimal from
the perspective of, for instance, cost effectiveness.
Several notions of protection trees or defense nodes have already been
proposed in the literature~\cite{EdDaRaMi,BiDaPe}. They mostly consist of adding
one layer of defenses to the attack tree, thus ignoring the fact that
in a dynamic system new attacks are mounted against these defenses and
that, consequently, yet more defenses are brought into place.
Such an alternating nature of attacks and defenses is captured in the
notion of attack--defense trees~\cite{KoMaRaSc}. 
In this recently developed extension of attack trees, 
the iterative structure of attacks and
defenses can be visualized and evolutionary aspects can be modeled.

These two developments, the formalization of attack trees and the
introduction of defenses, imply that an attack--defense tree
can be formally considered as a description of a game.
The purpose of this paper  is to make the connection between
attack--defense trees and game theory explicit.
We expect that the link between the relatively new field of attack
modeling and the well-developed field of game theory can be exploited
by making game theoretic analysis methods available to the attack
modeling community. As a first step, we study the relation between attack--defense trees
and games in terms of expressiveness. Rather than studying the
graphical attack--defense tree language, we  consider
an algebraic
representation of such trees, called \emph{attack--defense
terms} (ADTerms)~\cite{KoMaRaSc}, which allows for easier formal manipulation.

The main contribution  
of this paper is to show that ADTerms with a satisfiability attribute
are equivalent to two-player binary zero-sum extensive form games.
Whenever we talk about games, we 
refer to a game in this class. 
We show equivalence by defining two mappings: one from 
games to ADTerms and one from ADTerms to games. Then, we interpret a
\emph{strategy} in the game as a \emph{basic assignment} for the
corresponding ADTerm and vice versa.
Such a basic
assignment expresses which attacks and defenses are in place.
Equivalence then roughly means that for every winning strategy, there
exists a basic assignment that yields a satisfiable term, and vice versa.
Although the two formalisms have much in common, their equivalence
is not immediate. Two notions in the domain of
ADTerms have no direct correspondence
in the world of games: conjunctive nodes and refinements.
The mapping from ADTerms into games will have to solve this in a
semantically correct way.

This paper is structured as follows.
We introduce attack--defense terms and two-player binary zero-sum
extensive form games in Section~\ref{sec:prelim}. 
In Section~\ref{sec:gtoadt} we define a mapping from games to
attack--defense terms and prove that a player
can win the game if and only if he is successful in the corresponding ADTerm.
A reverse mapping is defined in Section~\ref{sec:adttog}.

This paper includes an appendix, containing the proofs that could not be included in
the main paper due to space restrictions.

\section{Preliminaries}
\label{sec:prelim}

\subsection{Attack--Defense Trees}

\label{sec:adt}
A limitation of attack trees is that they cannot capture the interaction between 
attacks carried out on a system and defenses put in place to fend off the attacks. 
To mitigate this problem and in order to be able to analyze an attack--defense scenario,
attack--defense trees are introduced in~\cite{KoMaRaSc}.
Attack--defense trees may have two types of nodes: attack nodes and defense nodes, representing 
actions of two opposing players.
The attacker and defender are modeled in a purely symmetric way. 
To avoid differentiating between attack--defense scenarios with an attack 
node as a root and a defense node as a root, the notions of \emph{proponent}
(denoted by $\pp$) and \emph{opponent} (denoted by $\po$) are introduced.
The root of an attack--defense tree represents the main goal of the proponent. To be more
precise, when the root is an attack node, the proponent is an attacker
and the opponent is a defender, and vice versa.

To formalize attack--defense trees we use attack--defense terms. 
Given a set $S$, we write $S^*$ for the set of all strings over $S$ and 
$\varepsilon$ for the empty string. 

\begin{definition}\label{def:ADTerms}
Attack--defense terms (ADTerms) are typed ground terms over a signature $\Sigma
=(S,F)$, where
\begin{itemize}
\item $S = \{\pp, \po \}$ is a set of types (we denote $-\pp=\po$ and $-\po=\pp$),
\item $F=\{(\vee^{\pp}_k)_{k\in\mathbb{N}},(\wedge^{\pp}_k)_{k\in\mathbb{N}},
(\vee^{\po}_k)_{k\in\mathbb{N}}, 
(\wedge^{\po}_k)_{k\in\mathbb{N}}, 
\counter^{\pp}, \counter^{\po}\}\cup \mathbb{B}^{\pp}\cup \mathbb{B}^{\po}$ 
is a set of functions equipped with a mapping 
$\type\colon F\to S^*\times S$, which expresses the type of each function
as follows.
For $k\in\mathbb{N}$,
\begin{align*}
&\type(\vee^{\pp}_k)=(\pp^k,\pp) &&\type(\vee^{\po}_k)=(\po^k,\po) \\
&\type(\wedge^{\pp}_k)=(\pp^k,\pp) &&\type(\wedge^{\po}_k)=(\po^k,\po)\\
&\type(\counter^{\pp})=(\pp\po,\pp) &&\type(\counter^{\po})=(\po\pp,\po)\\
&\type(b)=(\varepsilon,\pp), \text{ for\ } b\in\mathbb{B}^{\pp} && 
\type(b)=(\varepsilon,\po), \text{ for\ } b\in\mathbb{B}^{\po}.
\end{align*}
\end{itemize}
\end{definition}

The elements of $\mathbb{B}^{\pp}$ and $\mathbb{B}^{\po}$ 
are typed constants, which represent basic actions of the proponent and 
the opponent, respectively.
The functions $\vee^{\pp}_k,\wedge^{\pp}_k,\vee^{\po}_k, 
\wedge^{\po}_k$  
represent disjunctive ($\vee$) and conjunctive ($\wedge$)
refinement operators of arity $k$, for a proponent ($\pp$) and an opponent ($\po$), 
respectively. 
Whenever it is clear from the context, we omit the subscript $k$.
The binary function $\counter^s$ (`counter'), where $s\in S$, 
connects a term of the type $s$ with a countermeasure.
By $\terms$ we denote the set of all ADTerms. We partition $\terms$ into $\terms^\pp$ (the set
of terms of the proponent's type) and $\terms^\po$ (the set of terms of 
the opponent's type). To
denote the type of a term, we define a
function $\tau\colon \terms \to S$ by $\tau(t)=s$ if $t \in \terms^s$. 
\begin{example}\label{ex:ADTerm}
The ADTerm 
$t=\counter^{\pp}(\wedge^{\pp}(E,F),\vee^{\po}(G))\in \terms^\pp$
is graphically displayed in Fig.~\ref{fig:adt-and-game} (left).
For this ADTerm, we have
$\tau(t)=\pp$.
Subterms  $E$ and $F$ are basic actions of the proponent's type, 
and $G$ is a basic action of the opponent's type. Assuming the proponent
is the attacker, this means that the system can be attacked by combining
the basic attack actions $E$ and $F$. However the defender has the option
to defend if he implements the basic defense action $G$.
\end{example}

\begin{figure}[t]
\centering
\includegraphics{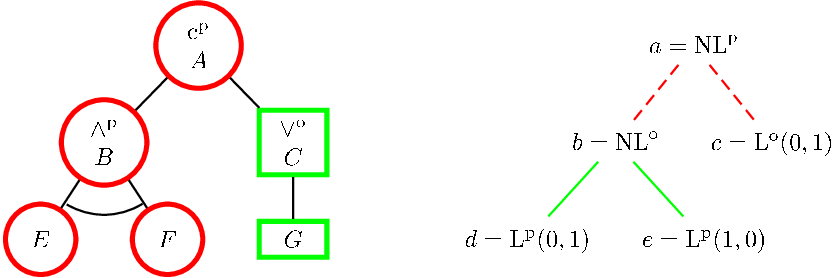}
	\caption{An example of an ADTerm (left) and a two-player binary zero-sum extensive form game (right).}
	\label{fig:adt-and-game}
\end{figure}

In order to check whether an attack--defense scenario is feasible, we
introduce the notion of satisfiability of an ADTerm by defining a satisfiability 
attribute $\sat$. First, for player $s\in\{\pp,\po\}$ we define a \emph{basic assignment} 
for $s$ as a function 
$\beta^s\colon \mathbb{B}^s \rightarrow \{\true, \false\}$.
We gather the basic assignments for both players in a \emph{basic assignment profile} 
$\beta = (\beta^\pp, \beta^\po)$. 
Second, the function 
$\sat\colon \terms \rightarrow \{\true, \false\}$ is used in order to calculate the 
satisfiability value of an ADTerm. It is defined recursively as follows 
\[
\sat(t)=
\begin{cases}
\beta^s(t^s), &\text{if $t=t^s\in \mathbb{B}^s$,} \\
\vee(\sat(t_1),\dots,\sat(t_k)), &\text{if $t=\vee^{s}(t_1,\dots,t_k)$},\\
\wedge(\sat(t_1),\dots,\sat(t_k)), &\text{if $t=\wedge^{s}(t_1,\dots,t_k)$},\\
\sat(t_1) \wedge \lnot \sat(t_2), &\text{if $t=\counter^s(t_1, t_2)$}.
\end{cases}
\]
For instance, 
consider the term $t$ from Example~\ref{ex:ADTerm} and the
 basic assignment profile $\beta=(\beta^{\pp},\beta^{\po})$, where
$\beta^{\pp}(E)=\true$, $\beta^{\pp}(F)=\true$, 
$\beta^{\po}(G)=\false$.
 We get $\sat(t)=\true$.
Assuming the proponent is the attacker, this means that the basic defense 
action $G$ is absent and the system is attacked by combining the basic
attack actions $E$ and $F$.

The next definition formalizes the notion of a satisfiable ADTerm for a player.

\begin{definition}\label{def:winning_sets} 
For every player $s$, strategy $\beta^s$ and strategy profile $\beta$, we define the sets of ADTerms $\win^s_\beta, \win^s_{\beta^s},  \win^s \subseteq \terms$ in the following way. Let $t \in \terms$.
\begin{itemize}
\item $t \in \win^s_\beta$ if either $\tau(t) = s$ and $\sem{t} = \true$, or $\tau(t) = -s$ and $\sem{t} = \false$. In this case we say that $s$ is successful in $t$ under $\beta$.
\item $t \in \win^s_{\beta^s}$ if $t \in \win^s_{(\beta^\pp, \beta^\po)}$ for every basic assignment $\beta^{-s}$. In this case we say that $s$ is successful in $t$ under $\beta^s$.
\item $t \in \win^s$ if there exists a basic assignment $\beta^s$ for player $s$ such that $t \in \win^s_{\beta^s}$. In this case we say that $t$ is satisfiable for $s$.
\end{itemize}
\end{definition}

\begin{theorem}\label{the:win}
For every ADTerm $t$, we have that every basic assignment profile $\beta$ partitions $\terms$ into $\winp$ and $\wino$.
\end{theorem}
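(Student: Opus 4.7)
The plan is to reduce the statement to a direct case analysis driven by the two binary pieces of data attached to any ADTerm under a fixed basic assignment profile: its type $\tau(t)\in\{\pp,\po\}$ and its satisfiability value $\sem{t}\in\{\true,\false\}$. Both are uniquely determined for every $t\in\terms$: $\tau(t)$ is read off from the outermost function symbol by Definition~\ref{def:ADTerms}, while $\sem{t}$ is well-defined by structural induction on $t$ using the recursive clauses defining $\sat$ (at the leaves it is given by $\beta^s$, and at every internal node it is a function of the values of the immediate subterms). In particular, for a fixed $\beta$ the pair $(\tau(t),\sem{t})$ takes exactly one of the four possible values.

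Next I would unfold Definition~\ref{def:winning_sets} for $s=\pp$ and $s=\po$ in each of these four cases and check membership in $\winp$ and $\wino$ directly. The outcome is that if $\tau(t)=\pp$ and $\sem{t}=\true$, then $t\in\winp$ and $t\notin\wino$; if $\tau(t)=\pp$ and $\sem{t}=\false$, then $t\notin\winp$ and $t\in\wino$; and symmetrically for $\tau(t)=\po$. Thus in every case exactly one of the sets contains $t$, which simultaneously establishes $\winp\cap\wino=\emptyset$ and $\winp\cup\wino=\terms$, i.e.\ a partition.

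There is no real obstacle here; the only care required is to note that the definition of $\win^s_\beta$ is asymmetric in $\tau(t)$ versus $s$, so one must treat both the ``$t$ of proponent's type'' and ``$t$ of opponent's type'' sub-cases explicitly rather than appealing to a symmetry. Once that bookkeeping is done, the theorem follows immediately, and it serves mainly to justify later phrasing that speaks of ``the player who is successful in $t$ under $\beta$''.
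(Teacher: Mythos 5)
Your proposal is correct and is essentially the paper's own argument: the paper simply states that the result ``follows immediately from the first item in Definition~\ref{def:winning_sets}'', and your four-way case analysis on $(\tau(t),\sem{t})$ is exactly the unfolding of that observation. No substantive difference in approach.
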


\begin{proof}
This follows immediately from the first item in Definition~\ref{def:winning_sets}.
\end{proof}

\subsection{Two-player Binary Zero-sum Extensive Form Games}
We consider \emph{two-player binary zero-sum extensive form games},
in which a proponent $\pp$ and an opponent $\po$ play against each other.
In those games, we allow
only for the outcomes $(1,0)$ and $(0,1)$, where $(1,0)$ means
that the proponent succeeds in his goal (breaking the system if he is
the attacker, keeping the system secure if he is the defender), and
$(0,1)$ means that the opponent succeeds.
Note that the proponent is not necessarily the player who plays first in the game.
Finally, we
restrict ourselves to extensive form games, i.e., games in tree format.
Our presentation of games differs from the usual one, because
we present games as terms. This eases the transformation of games into ADTerms.
We formalize  games in the next definition, 
where  $\lf$ stands for a leaf and $\nlf$ for a non-leaf of the term.

\begin{definition}
Let $S = \{ \pp, \po \}$ denote the set of players and $\Out = \{ (1,0), (0,1)\}$
the set of possible outcomes. 
A two-player binary zero-sum extensive form game is a term
  $t ::= \psi^\pp\ |\ \psi^\po$, where
\begin{eqnarray*}
\psi^\pp & ::= & \nlf^\pp(\psi^\po, \ldots, \psi^\po)\ |\ \lf^\pp(1,0)\ |\ \lf^\pp(0,1) \\
\psi^\po & ::= & \nlf^\po(\psi^\pp, \ldots, \psi^\pp)\ |\ \lf^\po(1,0)\ |\ \lf^\po(0,1)  .
\end{eqnarray*}
We denote the set of all two-player binary zero-sum extensive form games
by $\mathcal{G}$. We define the \emph{first player} of a game $\psi^s$ as
the function $\tau\colon \mathcal{G} \rightarrow S$ such that $\tau(\psi^s) = s$.
\end{definition}

\begin{example}
An example of a two-player binary zero-sum extensive form game is the
expression $\nlf^\pp(\nlf^\po(\lf^\pp(0,1), \lf^\pp(1,0)), \lf^\po(0,1))$.
This game is displayed in Fig.~\ref{fig:adt-and-game} (right).
When displaying extensive form games, we use dashed edges for choices made by the proponent, 
and solid edges for those made by the opponent.
In this game, first the proponent can pick from two
options; if he chooses the first option, the opponent can choose
between outcomes $(0,1)$ and $(1,0)$. If the proponent chooses the
second option, the game will end with outcome $(0,1)$.
\end{example}

\begin{definition}
A function $\sigma^s$ is a \emph{strategy}
for a game $g \in \mathcal{G}$ for player $s\in S$
if it assigns to every non-leaf of player $s$ in $g$
$\nlf^s(\psi_1^{-s}, \ldots, \psi_n^{-s})$ a term 
$\psi^{-s}_k$ for some $k \in \{1, \ldots, n \}$.

A \emph{strategy profile} for a game $g \in \mathcal{G}$ is a pair $\sigma=(\sigma^\pp,
\sigma^\po)$, where $\sigma^\pp$ is a strategy of $g$ for $\pp$, and
$\sigma^\po$ a strategy of $g$ for $\po$.
\end{definition}

If $g = \nlf^s(\psi^{-s}_1, \ldots, \psi^{-s}_n)$ and
$\sigma = ( \sigma^\pp, \sigma^\po )$,
sometimes we abuse notation and write
$\sigma(g) = \psi_k^{-s}$ where $\psi_k^{-s} = \sigma^s(g)$.

Now we define the outcome of a game in three steps.

\begin{definition}
We say that $(0,1) \leq^\pp (1,0)$ and $(1,0) \leq^\po (0,1)$, so that
$(\Out, \leq^\pp)$ and $(\Out, \leq^\po)$ are totally ordered sets.
Let $(r^\pp, r^\po)$ be an element of $\Out$, and $\psi^{-s}_1, \ldots, \psi^{-s}_n$ be games with player $-s$ as the first player.

\begin{enumerate}
\item The \emph{outcome} $\out_{(\sigma^\pp, \sigma^\po)}\colon \mathcal{G}
\rightarrow \Out$ of a game $g$ under strategy profile $\sigma =
(\sigma^\pp, \sigma^\po)$ is defined by:
\begin{eqnarray*}
\out_{(\sigma^\pp, \sigma^\po)}(\lf^s(r^\pp, r^\po)) &=& (r^\pp, r^\po) \\
\out_{(\sigma^\pp, \sigma^\po)}(\nlf^s(\psi^{-s}_1, \ldots, \psi^{-s}_n)) &=& \out_{(\sigma^\pp, \sigma^\po)}(\sigma^s(\nlf^s(\psi^{-s}_1, \ldots, \psi^{-s}_n)))
\end{eqnarray*}
\item The \emph{outcome} $\out_{\sigma^s}\colon \mathcal{G} \rightarrow \Out$
of a game $g$ under strategy $\sigma^s$ is defined by:
\begin{eqnarray*}
\out_{\sigma^s}(\lf^{s}(r^\pp, r^\po)) &=& (r^\pp, r^\po) \\
\out_{\sigma^s}(\nlf^s(\psi^{-s}_1, \ldots, \psi^{-s}_n)) &=& \out_{\sigma^s}(\sigma^s(\nlf^s(\psi^{-s}_1, \ldots, \psi^{-s}_n))) \\
\out_{\sigma^s}(\nlf^{-s}(\psi^{-s}_1, \ldots, \psi^{-s}_n)) &=& \max\displaylimits_{1 \leq i \leq n}{}_{\leq^{-s}}\{\out_{\sigma^s}(\psi^{-s}_i) \}
\end{eqnarray*}
\item The \emph{outcome} $\out\colon \mathcal{G} \rightarrow \Out$ of a game
$g$ is defined by:
\begin{eqnarray*}
\out(\lf^s(r^\pp, r^\po)) &=& (r^\pp, r^\po) \\
\out(\nlf^{s}(\psi^{-s}_1, \ldots, \psi^{-s}_n)) &=& \max\displaylimits_{1 \leq i \leq n}{}_{\leq^s}\{\out_{\sigma^s}(\psi^{-s}_i) \} \\
\out(\nlf^{-s}(\psi^{-s}_1, \ldots, \psi^{-s}_n)) &=& \max\displaylimits_{1 \leq i \leq n}{}_{\leq^{-s}}\{\out_{\sigma^s}(\psi^{-s}_i) \}
\end{eqnarray*}
\end{enumerate}
\end{definition}

Here $\out_{(\sigma^\pp, \sigma^\po)}$ denotes the outcome of the game
when $\pp$ and $\po$ play according to strategy $\sigma^\pp$ and
$\sigma^\po$, respectively. Furthermore $\out_{\sigma^s}$ denotes the
outcome if player $s$ plays strategy $\sigma^s$, and player $-s$ tries
to achieve the best possible outcome for himself. Finally, $\out$ denotes
the outcome of the game if both players try to maximize their own
outcome. 

\section{From Games to ADTerms}
\label{sec:gtoadt}

In this section, we show how to transform binary zero-sum two-player
extensive form games into 
ADTerms. We define a function that transforms games into ADTerms, and
a function that transforms a strategy for a game into a basic
assignment for the corresponding ADTerm. First we show that the player
who wins the game is also the player for whom the
corresponding ADTerm is satisfiable, if both players play the basic assignment
corresponding to their strategy in the game. Then we show that if a
player has a strategy in a game which guarantees him to win, 
he is successful in the corresponding ADTerm under the corresponding basic assignment.
For this purpose, we first define a function $\f{\cdot}$ that maps games into ADTerms.

\begin{definition}
\label{def:f}
Let  $v^s$, $u^s$, and $u_1^s, \ldots, u_n^s$, for $s \in S$, represent
fresh basic actions
from $\mathbb{B}^s$. 
The function $\f{\cdot}\colon \mathcal{G} \rightarrow \terms$ is
defined in the following way.
\begin{subequations}
\begin{eqnarray}
\lf^\pp(1,0) & \mapsto & v^\pp \label{eqn:D6-1}\\
\lf^\po(1,0) & \mapsto & \counter^\po(u^\po, v^\pp) \label{eqn:D6-2}\\
\lf^\pp(0,1) & \mapsto & \counter^\pp(u^\pp, v^\po) \label{eqn:D6-3}\\
\lf^\po(0,1) & \mapsto & v^\po \label{eqn:D6-4}\\
\nlf^\pp(\psi_1,\ldots,\psi_n) & \mapsto & \vee^\pp(\counter^\pp(u_1^\pp,\f{\psi_1}), 
\ldots, \counter^\pp(u_n^\pp,\f{\psi_n})) \label{eqn:D6-5}\\
\nlf^\po(\psi_1,\ldots,\psi_n) & \mapsto & \vee^\po(\counter^\po(u_1^\po,\f{\psi_1}), 
\ldots, \counter^\po(u_n^\po,\f{\psi_n}))\label{eqn:D6-6}\text{.}
\end{eqnarray}
\end{subequations}
\end{definition}

The rules for player $\pp$ are visualized in Fig.~\ref{fig:gameadt}
(the rules for player $\po$ are symmetric).
The rules
specify that a winning leaf for a player in the
game is transformed into a satisfiable ADTerm for this player, i.e., an
ADTerm consisting of only a leaf belonging to this player (Rule~\eqref{eqn:D6-1}--\eqref{eqn:D6-4}),
and that non-leaves in the game are transformed into disjunctive
ADTerms of the same player (Rule~\eqref{eqn:D6-5}--\eqref{eqn:D6-6}). These disjunctions have children
of the form $\counter^s(u_k^\pp,\f{\psi_k})$ for some $k$. The intended
meaning here is that player $s$ selects $u_k^\pp$ exactly when his
strategy selects $\psi_k$ in the game.
An example of a transformation of a game into an ADTerm is depicted in
Fig.~\ref{fig:example-conversion}.

The resulting ADTerm is thus conjunction-free. Note that because terms in
games alternate between $\pp$ and $\po$, this procedure results in
valid ADTerms (i.e., in terms of the form $\counter^s(u^{s_1}, v^{s_2})$,
$s_1 = s$ and $s_2 = -s$, and disjunctive terms
for player $s$ have children for player $s$ as well).

\begin{figure}[t]
\centering
\includegraphics{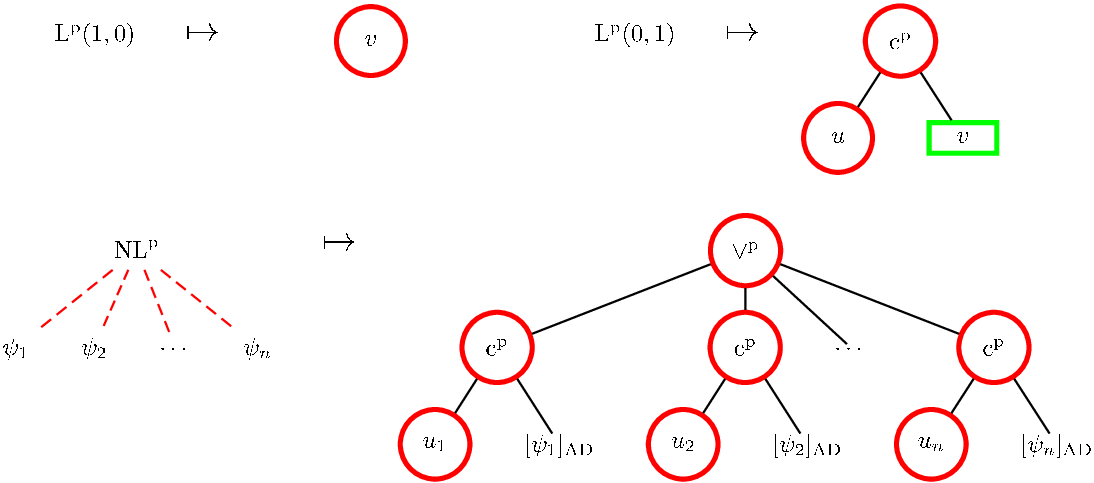}

\caption{Transformation of a game in extensive form into an ADTerm by function $\f{\cdot}$.}
\label{fig:gameadt}
\end{figure}

\begin{figure}[t]
\centering
\includegraphics{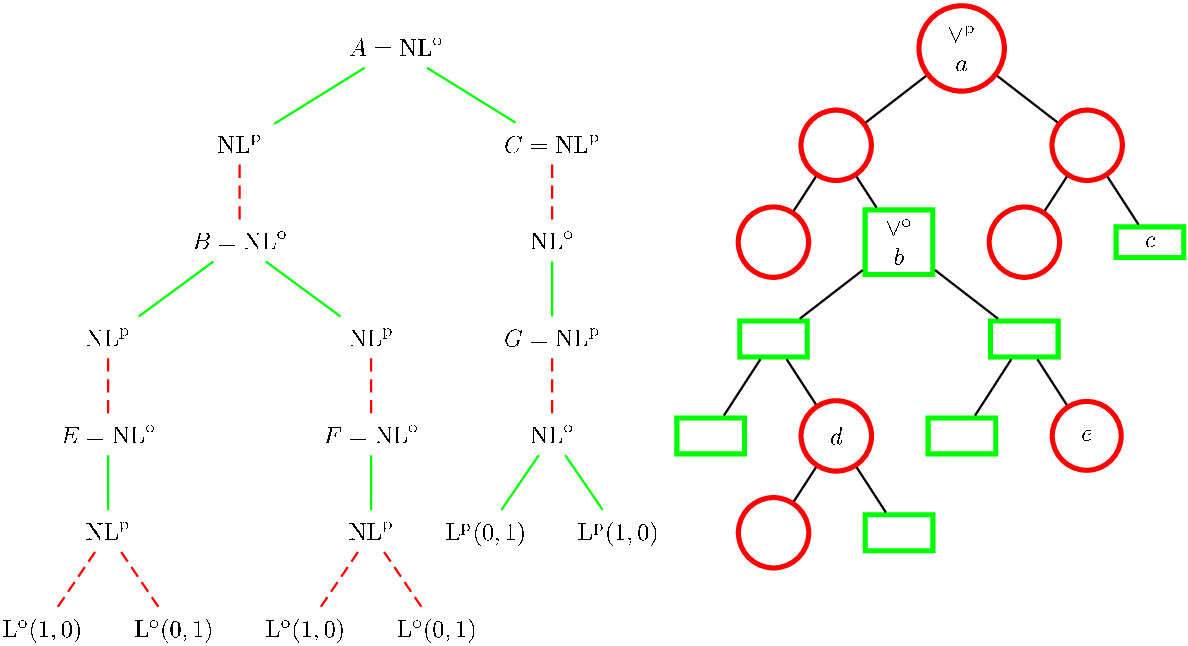}

	\caption{The result of the transformation of the ADTerm from Fig.~\ref{fig:adt-and-game} into a game (left), 
	and the game from Fig.~\ref{fig:adt-and-game} into an ADTerm (right).}
	\label{fig:example-conversion}
\end{figure}

Now we define how to transform a strategy profile for a game into a basic assignment profile for an ADTerm. First we define a transformation $\g{\cdot}$ from a strategy $\sigma^{s}$ ($s \in \{\pp,\po\}$) for game $g$ into a basic assignment $\beta^s = \g{\sigma^s}$ for ADTerm $\f{g}$. Intuitively, if a player's strategy for the game selects a certain branch, the basic assignment for the ADTerm assigns $\true$ to the node $u_k$ in the corresponding branch, and $\false$ to the nodes $u_k$ in the other branches. Furthermore, ADTerms resulting from leaves in the game are always selected.

\begin{definition}
Let $s$ be a player, $g$ be a game and $\sigma^s$ be a strategy of player $s$ for $g$. The function $\beta^s = \g{\sigma^s}$ is defined as follows. For all ADTerms $\counter^s(u^s,v^{-s})$ and $v^s$ resulting from the first four cases in Definition~\ref{def:f}, we set $\beta^s(u^s) = \beta^s(v^{s}) = \true$. For ADTerms obtained from game $g$ by one of the last two cases in Definition~\ref{def:f}, if $\sigma^s(g) = \psi_k$, we set $\beta^{s}(u_k^s) = \true$ and $\beta^{s}(u_i^s) = \false$ for $1 \leq i \leq n$, $i \neq k$.

The strategy profile $( \beta^\pp, \beta^\po )$ can be transformed into a basic assignment profile by $\g{ ( \beta^\pp, \beta^\po ) } = ( \g{ \beta^\pp}, \g{\beta^\po } )$.
\end{definition}

\bigskip

The next theorem states that a player is the winner in a game under a certain strategy profile if and only if he is successful in the corresponding ADTerm under the basic assignment profile corresponding to the strategy profile.

\begin{theorem}
\label{the:ga1}
Let $g$ be a game and $\sigma$ a strategy profile for $g$. Then
$\out_\sigma(g) = (1,0)$ if and only if $\f{g} \in \winpa$.
\end{theorem}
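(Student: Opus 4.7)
The plan is to prove the equivalence by structural induction on the game $g$, exploiting the fact that $\f{\cdot}$ preserves the first player (so $\tau(\f{\psi^s})=s$ throughout) and that $\g{\sigma}$ sets each $u_k^s$ exactly to encode the branch selected by $\sigma^s$. Throughout, I will keep in mind that $\f{g}\in\winpa$ unpacks to two cases: $\sat(\f{g})=\true$ when $\tau(\f{g})=\pp$, and $\sat(\f{g})=\false$ when $\tau(\f{g})=\po$.

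For the base cases, I would just unfold all four leaf rules of Definition~\ref{def:f} and the definition of $\g{\sigma}$, which sets every $v$ and $u$ arising from a leaf to $\true$. For $\lf^\pp(1,0)$: $\f{g}=v^\pp$, so $\tau(\f{g})=\pp$ and $\sat(\f{g})=\true$; outcome is $(1,0)$. For $\lf^\po(1,0)$: $\f{g}=\counter^\po(u^\po,v^\pp)$, so $\tau(\f{g})=\po$ and $\sat(\f{g})=\true\wedge\lnot\true=\false$; outcome is $(1,0)$. The remaining two leaves are symmetric, giving outcome $(0,1)$ and the opposite truth value. In all four cases the biconditional holds.

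For the inductive step, consider $g=\nlf^s(\psi_1,\dots,\psi_n)$ with $\sigma^s(g)=\psi_k$. Then $\f{g}=\vee^s(\counter^s(u_1^s,\f{\psi_1}),\dots,\counter^s(u_n^s,\f{\psi_n}))$, and by construction of $\g{\sigma^s}$ only $u_k^s$ is $\true$, so the definition of $\sat$ collapses to
\[
\sat(\f{g}) \;=\; \bigvee_{i=1}^{n}\bigl(\g{\sigma^s}(u_i^s)\wedge \lnot\sat(\f{\psi_i})\bigr)\;=\;\lnot\sat(\f{\psi_k}).
\]
On the game side, $\out_\sigma(g)=\out_\sigma(\psi_k)$ by Definition of the outcome under a strategy profile. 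Because each $\psi_i$ has first player $-s$, we have $\tau(\f{\psi_k})=-s$, so the induction hypothesis applied to $\psi_k$ says that $\out_\sigma(\psi_k)=(1,0)$ iff $\sat(\f{\psi_k})$ equals $\true$ when $-s=\pp$ and $\false$ when $-s=\po$. In the case $s=\pp$ this yields $\out_\sigma(g)=(1,0)$ iff $\sat(\f{\psi_k})=\false$ iff $\sat(\f{g})=\true$, which matches $\f{g}\in\winpa$ since $\tau(\f{g})=\pp$. In the case $s=\po$ we obtain $\out_\sigma(g)=(1,0)$ iff $\sat(\f{\psi_k})=\true$ iff $\sat(\f{g})=\false$, which matches $\f{g}\in\winpa$ since $\tau(\f{g})=\po$.

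The main obstacle is purely bookkeeping: the target condition $\f{g}\in\winpa$ flips polarity depending on $\tau(\f{g})$, and the $\counter$-induced negation in the recursive formula for $\sat(\f{g})$ interacts with this flip at every level of the induction. Once one verifies that $\f{\cdot}$ strictly alternates types (so that $\tau(\f{g})=-\tau(\f{\psi_i})$ for every child), the two negations compose consistently and the inductive step goes through for both $s=\pp$ and $s=\po$ in one stroke.
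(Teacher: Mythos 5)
Your proof is correct and follows essentially the same route as the paper's: structural induction on $g$, unfolding the four leaf rules in the base case, and in the inductive step using that $\g{\sigma^s}$ sets exactly $u_k^s$ to $\true$, so that $\sat(\f{g})=\lnot\sat(\f{\psi_k})$ while $\out_\sigma(g)=\out_\sigma(\psi_k)$ and $\tau(\f{g})=-\tau(\f{\psi_k})$. The only difference is presentational: you carry the biconditional through the induction in one stroke, whereas the paper proves the forward implication by cases on $s$, invokes symmetry for the $(0,1)$ outcome, and recovers the converse via Theorem~\ref{the:win} and contraposition.
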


The following theorem states that a strategy in a game guarantees
player $s$ to win if and only if $s$ is successful in the corresponding ADTerm
under the corresponding basic assignment.
Surprisingly, this is not a consequence
of Theorem~\ref{the:ga1}: there might be a basic
assignment $\beta^s$ for the ADTerm, for which there exists no strategy
$\sigma^s$ such that $\beta^s = \g{\sigma^s}$
(i.e, the function $\g{\cdot}$ is not surjective).
Therefore it is not
immediately clear that if a player has a strategy $\sigma^s$ that wins from
the other player independent of his strategy, a player with a basic assignment
$\g{\sigma^s}$ wins from the other player independent of his basic assignment.

\begin{theorem}
\label{the:gsadttrat}
Let $g$ be a game and $\sigma^\pp$ be a strategy for $\pp$ on $g$. Then $\out_{\sigma^\pp}(g) = (1,0)$ if and only if $\f{g} \in \winppa$.
\end{theorem}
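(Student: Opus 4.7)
The plan is to prove the two directions separately. The backward direction reduces to Theorem~\ref{the:ga1} applied to strategy profiles $(\sigma^\pp,\sigma^\po)$ for arbitrary $\sigma^\po$, but the forward direction cannot be handled the same way because the mapping $\g{\cdot}$ is not surjective onto basic assignments. For that direction I instead do a direct structural induction on $g$.

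For the backward direction, suppose $\f{g}\in\winppa$. Fix any strategy $\sigma^\po$ for $\po$ on $g$ and put $\beta^\po=\g{\sigma^\po}$. By the definition of $\winppa$, $\f{g}\in\win^\pp_{(\g{\sigma^\pp},\g{\sigma^\po})}=\win^\pp_{\g{(\sigma^\pp,\sigma^\po)}}$, and Theorem~\ref{the:ga1} applied to the profile $(\sigma^\pp,\sigma^\po)$ yields $\out_{(\sigma^\pp,\sigma^\po)}(g)=(1,0)$. Since $\sigma^\po$ was arbitrary, a short induction comparing the recursive definitions of $\out_{\sigma^\pp}$ and $\out_{(\sigma^\pp,\sigma^\po)}$---at $\pp$-nodes both follow $\sigma^\pp$, and at $\po$-nodes the former is the $\leq^\po$-maximum, which here equals $(1,0)$ because each branch is realized by some $\sigma^\po$---yields $\out_{\sigma^\pp}(g)=(1,0)$.

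For the forward direction I induct on the structure of $g$, strengthening the goal to: if $\out_{\sigma^\pp}(g)=(1,0)$ then $\f{g}\in\win^\pp_{(\g{\sigma^\pp},\beta^\po)}$ for \emph{every} basic assignment $\beta^\po$. The leaf cases follow immediately from the rule $\beta^\pp(v^\pp)=\beta^\pp(u^\pp)=\true$ built into $\g{\cdot}$; for $\lf^\pp(0,1)$ and $\lf^\po(0,1)$ the hypothesis is false and the implication is vacuous. At $g=\nlf^\pp(\psi_1,\dots,\psi_n)$ with $\sigma^\pp(g)=\psi_k$, the hypothesis gives $\out_{\sigma^\pp}(\psi_k)=(1,0)$, so the inductive hypothesis applied to $\psi_k$ (with the natural restriction of $\sigma^\pp$) gives $\sat(\f{\psi_k})=\false$ since $\tau(\psi_k)=\po$; because $\g{\sigma^\pp}$ sets $u_k^\pp$ to $\true$ and each other $u_i^\pp$ to $\false$, the disjunction defining $\sat(\f{g})$ collapses to $\lnot\sat(\f{\psi_k})=\true$, so $\f{g}\in\win^\pp$. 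At $g=\nlf^\po(\psi_1,\dots,\psi_n)$, the $\leq^\po$-maximum being $(1,0)$ forces $\out_{\sigma^\pp}(\psi_i)=(1,0)$ for every $i$, so by the inductive hypothesis each $\sat(\f{\psi_i})=\true$; then every disjunct $\beta^\po(u_i^\po)\wedge\lnot\sat(\f{\psi_i})$ evaluates to $\false$ and $\sat(\f{g})=\false$, placing $\f{g}$ in $\win^\pp$ since $\tau(g)=\po$.

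The obstacle the paper flags---that some $\beta^\po$ have no preimage under $\g{\cdot}$---is exactly what prevents a direct appeal to Theorem~\ref{the:ga1} for the forward direction, but the induction bypasses it: at $\pp$'s internal nodes the rigid form of $\g{\sigma^\pp}$ selects a single disjunct independently of $\beta^\po$, and at $\po$'s internal nodes the inductive hypothesis forces every disjunct to $\false$ before $\beta^\po$ can have any effect.
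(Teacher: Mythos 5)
Your proof is correct. The forward direction --- the one the paper flags as non-trivial because $\g{\cdot}$ is not surjective --- is handled by essentially the same structural induction as the paper's: at a $\pp$-node the disjunct selected by $\g{\sigma^\pp}$ is forced true independently of $\beta^\po$, and at a $\po$-node the $\leq^\po$-maximum being $(1,0)$ forces every child into the inductive hypothesis, so every disjunct $\beta^\po(u_i^\po)\wedge\lnot\sat(\f{\psi_i})$ is false whatever $\beta^\po$ is. Where you genuinely differ is the backward direction: the paper proves it by running a symmetric induction on the negated statement and taking the contrapositive, whereas you reduce it to Theorem~\ref{the:ga1} applied to each profile $(\sigma^\pp,\sigma^\po)$ and then invoke the minimax fact that $\out_{\sigma^\pp}(g)=(1,0)$ whenever $\out_{(\sigma^\pp,\sigma^\po)}(g)=(1,0)$ for every $\sigma^\po$. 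That reduction is sound --- $\winppa$ quantifies over all $\beta^\po$, hence in particular over all assignments of the form $\g{\sigma^\po}$ --- and it buys you one fewer bespoke induction, at the cost of an auxiliary lemma relating $\out_{\sigma^\pp}$ to the family $\{\out_{(\sigma^\pp,\sigma^\po)}\}_{\sigma^\po}$; that lemma is standard (construct the opponent's best response at each $\po$-node) but is not stated anywhere in the paper, so you should prove it explicitly rather than leave it as a parenthetical remark.
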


Now we obtain immediately the following corollary by definition of $\out$ and $\winpx$.

\begin{corollary}
\label{the:cor1}
Whenever $g$ is a game, $\out(g) = (1,0)$ if and only if $\f{g} \in \winpx$.
\end{corollary}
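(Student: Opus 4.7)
The plan is to derive the corollary from Theorem~\ref{the:gsadttrat} together with two structural facts: first, that a finite binary zero-sum extensive form game always admits an optimal strategy, so that $\out(g) = (1,0)$ is equivalent to the existence of some strategy $\sigma^\pp$ with $\out_{\sigma^\pp}(g) = (1,0)$ (and, symmetrically, $\out(g) = (0,1)$ is equivalent to the existence of some $\sigma^\po$ with $\out_{\sigma^\po}(g) = (0,1)$); second, Theorem~\ref{the:win}, which tells us that under any fixed basic assignment profile $\beta$, the sets $\winp$ and $\wino$ partition $\terms$.

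For the forward direction, I would assume $\out(g) = (1,0)$. By backward induction on the finite game tree, the proponent has an optimal strategy $\sigma^\pp$ with $\out_{\sigma^\pp}(g) = (1,0)$. Theorem~\ref{the:gsadttrat} then immediately yields $\f{g} \in \winppa$, and since $\g{\sigma^\pp}$ is a legitimate basic assignment for $\pp$, the existential clause in the definition of $\winpx$ is satisfied, giving $\f{g} \in \winpx$.

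For the reverse direction, I would argue by contraposition. If $\out(g) \neq (1,0)$, then $\out(g) = (0,1)$, since only those two outcomes are possible, so by backward induction the opponent has a strategy $\sigma^\po$ with $\out_{\sigma^\po}(g) = (0,1)$. The symmetric counterpart of Theorem~\ref{the:gsadttrat} (obtained by swapping the roles of $\pp$ and $\po$, which is justified by the symmetry of Definition~\ref{def:f}) then yields $\f{g} \in \winooa$; that is, $\f{g} \in \win^\po_{(\beta^\pp, \g{\sigma^\po})}$ for every basic assignment $\beta^\pp$. Theorem~\ref{the:win} in turn forces $\f{g} \notin \win^\pp_{(\beta^\pp, \g{\sigma^\po})}$ for every such $\beta^\pp$, so no basic assignment for $\pp$ can witness membership of $\f{g}$ in $\winpx$; hence $\f{g} \notin \winpx$, as required.

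The one real subtlety, already flagged in the motivation of Theorem~\ref{the:gsadttrat}, is that $\g{\cdot}$ need not be surjective, so one cannot simply take an arbitrary winning $\beta^\pp$ and recover a corresponding strategy $\sigma^\pp$. The contrapositive step above sidesteps this difficulty by applying Theorem~\ref{the:gsadttrat} on the opponent's side and then invoking the $(\winp, \wino)$ partition, rather than trying to invert $\g{\cdot}$ directly; this is the only genuinely non-routine move in the argument.
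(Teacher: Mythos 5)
Your argument is correct, and its forward direction is exactly the paper's: determinacy gives a strategy $\sigma^\pp$ with $\out_{\sigma^\pp}(g)=(1,0)$, Theorem~\ref{the:gsadttrat} gives $\f{g}\in\winppa$, and existential weakening gives $\f{g}\in\winpx$. The two proofs part ways on the converse. The paper's appendix proof is a bare chain of equivalences that reads Theorem~\ref{the:gsadttrat} as if it covered arbitrary basic assignments $\beta^\pp$; taken literally that theorem only covers assignments of the form $\g{\sigma^\pp}$, so passing from ``$\f{g}\in\win^\pp_{\beta^\pp}$ for some $\beta^\pp$'' back to ``some $\sigma^\pp$ wins'' is precisely the non-surjectivity problem the paper itself flags before Theorem~\ref{the:gsadttrat}, and the appendix leaves that step implicit. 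Your contrapositive route---when $\out(g)=(0,1)$, take a winning $\sigma^\po$, apply the $\po$-symmetric form of Theorem~\ref{the:gsadttrat} to get $\f{g}\in\winooa$, and use the partition of Theorem~\ref{the:win} to conclude $\f{g}\notin\win^\pp_{(\beta^\pp,\g{\sigma^\po})}$ for every $\beta^\pp$, hence $\f{g}\notin\winpx$---makes that step explicit and is the more careful treatment. What you buy is rigor on the right-to-left implication; what you pay is an explicit appeal to determinacy of finite perfect-information games (backward induction) and to the unstated but evidently intended $\po$-version of Theorem~\ref{the:gsadttrat}, both of which are harmless here.
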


\section{From ADTerms to Games}
\label{sec:adttog}

We proceed with the transformation in the other direction. We define two transformations, namely from ADTerms into games, and from basic assignment profiles into strategy profiles. Then we show that if a player has a basic assignment for an ADTerm with which he is successful, the corresponding strategy in the corresponding game guarantees him to win.

\begin{definition}
\label{def:fadtgame}
We define a function $\fp{\cdot}$ from ADTerms to games as follows:
\begin{subequations}
\begin{eqnarray}
v^\pp & \mapsto & \nlf^\po(\nlf^\pp(\lf^\po(0,1), \lf^\po(1,0))) \label{def:fadtgame_a} \\
v^\po & \mapsto & \nlf^\pp(\nlf^\po(\lf^\pp(1,0), \lf^\pp(0,1))) \label{def:fadtgame_b} \\
\vee^\pp(\psi_1, \ldots, \psi_n) & \mapsto & \nlf^\po(\nlf^\pp(\fp{\psi_1},\ldots,\fp{\psi_n})) \label{def:fadtgame_c} \\
\vee^\po(\psi_1, \ldots, \psi_n) & \mapsto & \nlf^\pp(\nlf^\po(\fp{\psi_1},\ldots,\fp{\psi_n})) \label{def:fadtgame_d} \\
\wedge^\pp(\psi_1, \ldots, \psi_n) & \mapsto & \nlf^\po( \nlf^\pp(\fp{\psi_1}), \ldots, \nlf^\pp(\fp{\psi_n}) ) 
\label{def:fadtgame_e}\\
\wedge^\po(\psi_1, \ldots, \psi_n) & \mapsto & \nlf^\pp( \nlf^\po(\fp{\psi_1}), \ldots, \nlf^\po(\fp{\psi_n}) ) \label{def:fadtgame_f} \\
\counter^\pp(\psi_1, \psi_2) & \mapsto & \nlf^\po( \nlf^\pp(\fp{\psi_1}), \fp{\psi_2} ) 
\label{def:fadtgame_g} \\
\counter^\po(\psi_1, \psi_2) & \mapsto & \nlf^\pp( \nlf^\po(\fp{\psi_1}), \fp{\psi_2} )
\label{def:fadtgame_h}
\end{eqnarray}
\end{subequations}
\end{definition}

A graphical representation of the rules for player $\pp$ is displayed in Fig.~\ref{fig:adtgame}
(the rules for player $\po$ are symmetric).
It can easily be checked that this construction guarantees valid games
(in which $\pp$-moves and $\po$-moves alternate). According to these
rules, we transform leaves for player $s$ into two options for player
$s$, a losing and a winning one (Rules~\eqref{def:fadtgame_a} and \eqref{def:fadtgame_b}). These choices
correspond to not choosing and choosing the leaf in the
ADTerm, respectively. Disjunctive terms for player $s$ are transformed
into choices for player $s$ in the game (Rules~\eqref{def:fadtgame_c} and \eqref{def:fadtgame_d}). There is no
direct way of representing conjunctions in games. We can still handle
conjunctive terms though, by transforming them into choices for the
other player (Rules~\eqref{def:fadtgame_e} and \eqref{def:fadtgame_f}). This reflects the fact that a player
can succeed in all his options exactly when there is no way for the other player
to pick an option which allows him to succeed. Finally, countermeasures
against player $s$ are transformed into a choice for player $-s$ (Rules~\eqref{def:fadtgame_g} and \eqref{def:fadtgame_h}). Here, the first option corresponds to player
$-s$ not choosing the countermeasure, so that it is up to player $s$ whether
he succeeds or not, while the second option corresponds
to player $-s$ choosing the countermeasure.

The transformation of a game into an ADTerm is illustrated in
Fig.~\ref{fig:example-conversion}.

\begin{figure}[t]
\centering
\includegraphics{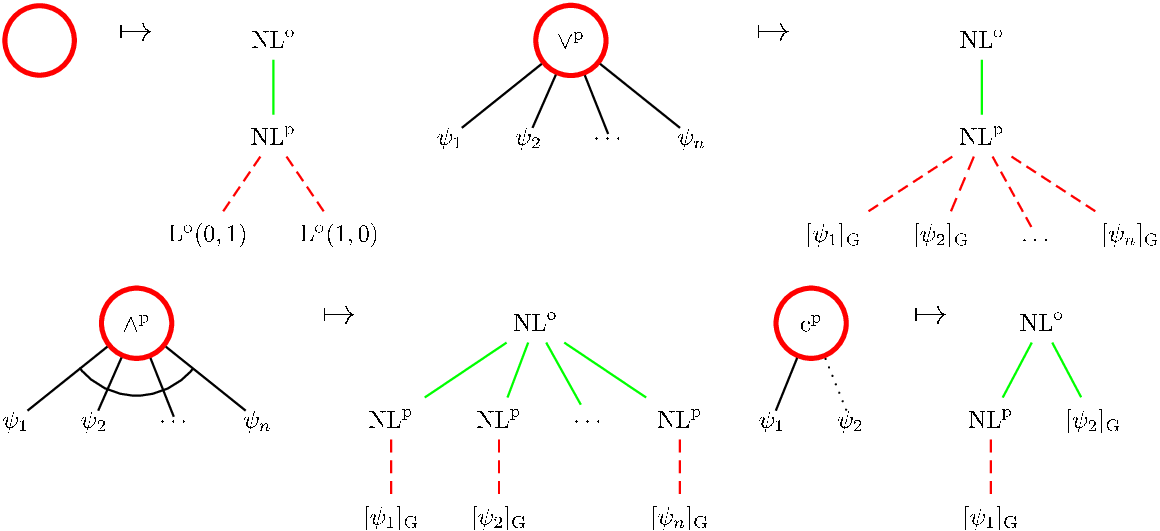}

\caption{Transformation of an ADTerm into a game by means of function
$\fp{\cdot}$.}
\label{fig:adtgame}
\end{figure}

We proceed by defining a transformation $\gp{\cdot}$ from a basic assignment for an ADTerm into a strategy for the corresponding game. We only give the definition for $s=\pp$; the definition for $s=\po$ is symmetric. 

\begin{definition}
\label{def:gp}
Function $\gp{\cdot}$ is a transformation from a basic assignment $\beta^\pp$ for ADTerm $t$ into a strategy $\sigma^\pp=\gp{\beta^\pp}$ for the game $\fp{t}$. If a (sub)term from $\fp{t}$ is obtained by rule $(2n)$ in Definition~\ref{def:fadtgame}, then $\sigma^\pp$ of that (sub)term is defined by rule $(3n)$ in this definition.

\begin{tabular}{lcllr}
	$\sigma^\pp(\nlf^\pp(\lf^\po(0,1), \lf^\po(1,0)))$&=&$\lf^\po(1,0)$ & if $\beta^s(v) = \true$. \hspace{3px} & $\mathrm{(3a)}$\\
	&=&$\lf^\po(0,1)$ & otherwise.\\
	$\sigma^\pp(\nlf^\pp(\fp{\psi_1},\ldots,\fp{\psi_n}))$&=&$\fp{\psi_k}$ & & $\mathrm{(3c)}$ \\
                         \multicolumn{4}{r}{where $k$ is the smallest number such that $\psi_k \in \winpp$.}\\
	&=&$\fp{\psi_1}$\\
                         \multicolumn{4}{r}{if there exists no such number.}\\
	$\sigma^\pp(\nlf^\pp( \nlf^\po(\fp{\psi_1}), \ldots, \nlf^\po(\fp{\psi_n}) ))$&=&$\nlf^\po(\fp{\psi_k})$&& $\mathrm{(3f)}$\\ 
                         \multicolumn{4}{r}{where $k$ is the smallest number such that $\psi_k \in \winpp$.}\\
	&=&$\nlf^\po(\fp{\psi_1})$\\
                         \multicolumn{4}{r}{if there exists no such number.}\\
	$\sigma^\pp(\nlf^\pp( \nlf^\po(\fp{\psi_1}), \fp{\psi_2} ))$&=&$\nlf^\po(\fp{\psi_1})$ & if $\psi_2 \not\in \winpp$.& $\mathrm{(3h)}$\\
	&=&$\fp{\psi_2}$ & otherwise.
\end{tabular}
For Rules (3b), (3d), (3e) and (3g), $\sigma^\pp$ is trivially defined as there is only one refinement.
\end{definition}

Note that some of the rules, namely Rules (3c), (3f) and (3h), are non-local in the sense that we need to evaluate all subterms of the ADTerm before we can decide what to play in the game.

\begin{theorem}
\label{the:adtgstrat}
Let $t$ be an ADTerm and $\beta^\pp$ a basic assignment for $t$. Then $t \in \winpp$ if and only if $\out_{\gp{\beta^\pp}}(\fp{t}) = (1,0)$.
\end{theorem}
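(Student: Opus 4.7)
The plan is to prove the equivalence by structural induction on $t$. The base cases are immediate: for $t = v^\pp$, one checks $\sem{v^\pp} = \beta^\pp(v^\pp)$ against the game $\fp{v^\pp}$, whose only nontrivial $\pp$-move is chosen by rule (3a) to be $\lf^\po(1,0)$ exactly when $\beta^\pp(v^\pp) = \true$; for $t = v^\po$, neither side ever yields success, since the opponent can always set $\beta^\po(v^\po) = \true$ on the ADT side, while on the game side the opponent at the bottom always picks $\lf^\pp(0,1)$ regardless of $\gp{\beta^\pp}$.

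For the inductive step there are six cases, one per nonleaf rule of Definition~\ref{def:fadtgame}. In each case I unfold $\fp{t}$ and the matching rule of $\gp{\cdot}$, compute $\out_{\gp{\beta^\pp}}(\fp{t})$ in terms of the outcomes $\out_{\gp{\beta^\pp}}(\fp{\psi_i})$ on subterms, and appeal to the inductive hypothesis. For $\wedge^\pp$, $\vee^\po$, and $\counter^\pp$ the translated game puts the opponent at the decisive node, whose $\max_{\leq^\po}$ evaluation forces $(1,0)$ iff every subgame evaluates to $(1,0)$; this matches the semantic side, where $t \in \winpp$ iff every $\psi_i \in \winpp$, as can be seen directly by pushing the universal quantifier over $\beta^\po$ inside the conjunctive structure of $\sem{t}$. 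For $\vee^\pp$, $\wedge^\po$, and $\counter^\po$ the rules (3c), (3f), and (3h) of $\gp{\cdot}$ pick a specific subterm, namely one that lies in $\winpp$ whenever such a subterm exists; the inductive hypothesis then makes the game outcome equal to $(1,0)$ iff at least one of the $\psi_i$ lies in $\winpp$.

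The main obstacle lies in the last three cases: to align the game condition with the ADT condition I need the equivalence between $t \in \winpp$ and ``some $\psi_i$ lies in $\winpp$.'' A priori the former only asserts that for each $\beta^\po$ some $\psi_i$ succeeds, not that a single $\psi_i$ succeeds uniformly. I bridge this gap with an auxiliary monotonicity lemma, proved separately by structural induction: for a $\pp$-typed (respectively $\po$-typed) ADTerm $s$, $\sem{s}$ is monotone nondecreasing (respectively nonincreasing) in $\beta^\pp$ and nonincreasing (respectively nondecreasing) in $\beta^\po$ under pointwise ordering of assignments. The only delicate step of the lemma is $\counter^s(s_1, s_2)$, which works because the type discipline forces $s_2$ to have opposite type $-s$, so its dual monotonicity composed with the negation yields the correct polarity for the composite. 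Given monotonicity, if no single $\psi_i$ were in $\winpp$ I would obtain for each $i$ a witnessing $\beta^\po_i$ defeating $\psi_i$ and replace them by their pointwise join; monotonicity would then defeat every $\psi_i$ simultaneously under that join, contradicting $t \in \winpp$. With this lemma in hand, the remainder is a routine verification for each of the six inductive cases.
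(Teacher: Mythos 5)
Your plan follows the same overall route as the paper's proof: a structural induction on $t$ organized around the eight rules of Definition~\ref{def:fadtgame}, with the base cases $v^\pp$ and $v^\po$ handled exactly as you describe, and the six inductive cases split into those where the decisive node of $\fp{t}$ belongs to the opponent and is evaluated by $\max_{\leq^\po}$ (your ``conjunctive'' cases $\wedge^\pp$, $\vee^\po$, $\counter^\pp$) and those where Rules (3c), (3f), (3h) of $\gp{\cdot}$ select a witnessing subterm (your ``disjunctive'' cases $\vee^\pp$, $\wedge^\po$, $\counter^\po$). The genuine difference is your auxiliary monotonicity lemma, and it earns its keep. In case (3c) the paper passes directly from ``$\sem{t}=\true$, so some $\sem{\psi_j}=\true$; let $k$ be the smallest such index'' to ``$\psi_k\in\winpp$'', which as written conflates satisfaction under one particular $\beta^\po$ with satisfaction under every $\beta^\po$; the same silent step occurs in cases (3f) and (3h). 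Your lemma --- $\sem{\cdot}$ is antitone in $\beta^\po$ on $\pp$-typed terms and monotone on $\po$-typed ones, with the polarity flip at $\counter^s$ handled by the type discipline --- together with the pointwise join of the defeating opponent assignments, is exactly what is needed to conclude that $t\in\winpp$ forces a single uniform witness $\psi_k\in\winpp$, which is what the definition of the strategy in Rules (3c), (3f), (3h) presupposes. So your write-up is the same induction with the one non-obvious step made explicit: the cost is one extra lemma, the gain is that the disjunctive cases (and hence also the converse direction, which the paper dispatches ``in a similar way'') actually close.
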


Now the definition of $\out$ and $\winpx$ gives us the following corollary.

\begin{corollary}
\label{the:cor2}
Whenever $t$ is an ADTerm, $t \in \winpx$ if and only if $\out(\fp{t}) = (1,0)$.
\end{corollary}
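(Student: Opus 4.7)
The plan is to exploit Theorem~\ref{the:adtgstrat} to reduce the statement about $\winpx$ to a statement about the existence of a winning strategy of a particular shape in $\fp{t}$, and then bridge the gap to $\out(\fp{t})=(1,0)$ via the standard minimax/determinacy property of the finite binary zero-sum game $\fp{t}$.

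First I would unfold $\winpx$: by Definition~\ref{def:winning_sets}, $t \in \winpx$ iff there is some $\beta^\pp$ with $t \in \winpp$, which by Theorem~\ref{the:adtgstrat} is equivalent to the existence of some $\beta^\pp$ with $\out_{\gp{\beta^\pp}}(\fp{t}) = (1,0)$. Thus the corollary reduces to showing
\[
\bigl(\exists\, \beta^\pp:\ \out_{\gp{\beta^\pp}}(\fp{t}) = (1,0)\bigr)\ \Longleftrightarrow\ \out(\fp{t}) = (1,0).
\]

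For the forward direction I would simply take a witnessing $\beta^\pp$ and set $\sigma^\pp := \gp{\beta^\pp}$. Since $\pp$ already forces outcome $(1,0)$ against every play of $\po$ with this strategy, and since $(1,0)$ is the top element of $\leq^\pp$, the backward-induction definition of $\out$ (which picks the $\leq^\pp$-maximal outcome at $\pp$-nodes) gives $\out(\fp{t}) = (1,0)$ immediately.

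For the converse I would argue by contraposition. Assuming $\out(\fp{t}) \neq (1,0)$, outcomes being binary forces $\out(\fp{t}) = (0,1)$. A short induction on the game tree, using only the definitions of $\out$ and $\out_{\sigma^\pp}$, shows that in any finite two-player binary zero-sum game $g$, $\out(g) = (0,1)$ implies $\out_{\sigma^\pp}(g) = (0,1)$ for every $\sigma^\pp$ (standard determinacy: the loser under optimal play still loses against a best-responding opponent). Instantiating this with $\sigma^\pp = \gp{\beta^\pp}$ for each $\beta^\pp$ and reading Theorem~\ref{the:adtgstrat} backwards gives $t \notin \winpp$ for every $\beta^\pp$, hence $t \notin \winpx$.

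The main obstacle is making the determinacy step precise: $\gp{\cdot}$ is not surjective onto strategies, so one cannot directly convert an arbitrary winning $\sigma^\pp$ into some $\gp{\beta^\pp}$. The cleanest workaround is the contrapositive route above, which needs only that \emph{no} $\pp$-strategy (and in particular no $\gp{\beta^\pp}$) can win when the game value is $(0,1)$; this is the one small side-lemma one must check by induction on the structure of $\fp{t}$. Everything else is bookkeeping on top of Theorem~\ref{the:adtgstrat}.
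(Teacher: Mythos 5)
Your proof is correct and follows essentially the same route as the paper's own (two-line) proof: unfold $\winpx$, apply Theorem~\ref{the:adtgstrat}, and then relate the existence of a winning strategy to $\out(\fp{t})$ via the minimax definition of $\out$. You are in fact more careful than the paper, which asserts the last equivalence ``by definition of $\out$'' and silently glosses over the non-surjectivity of $\gp{\cdot}$ that your contrapositive determinacy argument explicitly circumvents.
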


\section{Conclusion}
We showed that attack--defense terms and binary zero-sum two-player
extensive form games have equivalent expressive power when considering satisfiability, 
in the sense
that they can be converted into each other while preserving their
outcome. 
Moreover, the transformations preserved internal structure, in the sense that
there exists injections between subterms in the game and subterms in the
ADTerm such that if a player wins in the subterm of the game, the
corresponding subterm in the ADerm is satisfiable for this player,
and vice versa. Therefore attack--defense trees with a satisfiability attribute and 
binary zero-sum
two-player extensive form games can be seen as two different
representations of the same concept.
Both representations have their
advantages. On the one hand, attack--defense trees are more intuitive,
because conjunctions and refinements can be explicitly modeled. On the other hand, the
game theory representation profits from the well-studied theoretical
properties of games.

We saw that two notions in the domain of
ADTerms had no direct correspondence to notions
in the world of games: conjunctive nodes and refinements.
The first problem has been solved by transforming conjunctive nodes for one player
to disjunctive nodes for the other player.
This also shows that, when considering the satisfiability attribute, 
the class of conjunction-free ADTerms has equal expressive power to the full class of ADTerms
(note that the transformation from ADTerms into games and vice versa are not
each other's inverse, i.e., $\fp{\f{t}} \neq t$ and $\f{\fp{t}} \neq t$). 
The second problem has been solved
by adding extra dummy moves with only one option for the other player in between refining
and refined nodes.

In the future, we plan to consider attack--defense trees accompanied with 
more sophisticated attributes, such that a larger class
of games can be converted. 
 An example of these are non-zero-sum games, where
$(1,1)$ can be interpreted as an outcome where both the attacker and
the defender profit (for example, if the attacker buys his goal from the
defender), and $(0,0)$ as an outcome where both parties are damaged 
(when the attacker fails in his goal, but his efforts damage the defender
in some way). Also the binary requirement can be lifted, so that
the outcome of a player represents for instance the cost or gain of 
his actions. 
Furthermore, it would be interesting to look for a correspondence of
incomplete and imperfect information in attack--defense trees.

\subsubsection*{Acknowledgments}
The authors would like to thank Leon van der Torre and Wojciech Jamroga for valuable discussions on the topic of this paper.

\bibliographystyle{splncs}
\bibliography{adt}

\appendix
\section{Appendix: proofs}

\begin{proof}[Theorem \ref{the:ga1}]
Let $g$ be a game and $\sigma$ a strategy profile for $g$. We set $t = \f{g}$ and $\beta = \g{\sigma}$. We prove the theorem by structural induction on term $g$. 

In the base case, $g=\lf^s(r^\pp, r^\po)$ with $r^\pp, r^\po \in \{0,1\}$.
If $\out_\sigma(g) = (1,0)$, $r^\pp = 1$ and $r^\po = 0$, so either $t=v^\pp$ or
$t=\counter^\po(u^\po, v^\pp)$. In the first case, $\sem{v^\pp} =
\true$ by definition of $g$, so $\sem{t} = \true$ and therefore $t \in
\winp$. In the second case, $\sem{u^\po} = \sem{v^\pp} = \true$ by
definition of $g$, so $\sem{t} = \sem{\counter^\po(u^\po, v^\pp)} =
\false$. Therefore $t \not\in \wino$ and thus $t \in \winp$. If
$\out_\sigma(g) \neq (1,0)$, $\out_\sigma(g) = (0,1)$, and by symmetry
$t \in \wino$ and thus $t \not\in \winp$.

We proceed with the induction step. We first give the proof that
$\out_\sigma(g) = (1,0)$ implies $t \in \winp$. We assume
$\out_\sigma(g) = (1,0)$. Then $g=\nlf^s(\psi_1,\ldots,\psi_n)$, so
$t=\vee^s(\counter^s(u_1^s,\f{\psi_1})), \ldots,
\counter^s(u_n^s,\f{\psi_n})$. Furthermore, there exists a $k$ such
that $\psi_k = \sigma(g)$. Therefore $\out_{\sigma}(\psi_k) = (1,0)$.
By induction hypothesis, $\f{\psi_k} \in \winp$. Now assume $s=\pp$,
so $\tau(\psi_k) = \po$ by the definition of a game. Then
$\tau(\f{\psi_k}) = \po$ by definition of $\f{\cdot}$, so
$\sem{\f{\psi_k}} = \false$ by Definition~\ref{def:winning_sets} and
$\sem{u_k^\pp} = \true$ by definition of $g$. Therefore
$\sem{\counter^\pp(u_k^\pp,\f{\psi_k})} = \true$, and therefore
$\sem{t} = \sem{\vee^s(\counter^s(u_1^s,\f{\psi_1}), \ldots,
\counter^s(u_n^s,\f{\psi_n})} = \true$. Because it holds that $\tau(t)
= \tau(g) = \pp$, we have $t \in \winp$.

Now assume $s = \po$, so $\tau(\psi_k) = \pp$ by the definition of a game. Then $\tau(\f{\psi_k}) = \pp$ by definition of $\f{\cdot}$, so $\sem{\f{\psi_k}} = \true$ by Definition~\ref{def:winning_sets}. Therefore it holds that $\sem{\counter^\pp(u_k^\pp,\f{\psi_k})} = \false$. Furthermore, whenever $1 \leq s \leq n$, $s \neq k$ $\sem{u^s} = \false$ by definition of $g$, and therefore $\sem{\counter^\pp(u^\pp,\f{\psi_k})} = \false$. This means that $\sem{t} = \sem{\vee^\po(\counter^\po(u_1^\po,\f{\psi_1}), \ldots, \counter^\po(u_n^\po,\f{\psi_n}))} = \false$. Because $\tau(t) = \tau(g) = \po$, $t \not\in \wino$ and therefore $t \in \winp$.

Now we show that $t \in \winp$ implies $\out_\sigma(g) = (1,0)$. We have that $\out_\sigma(g) = (0,1)$ implies $t \in \wino$ by symmetry. Then we also have $\out_\sigma(g) \neq (0,1)$ implies $t \not\in \winp$ by Theorem~\ref{the:win}. Then we have $t \in \winp$ implies $\out_\sigma(g) = (0,1)$ by contraposition.
\end{proof}

\begin{proof}[Theorem \ref{the:gsadttrat}]
Again we prove this by structural induction on the game. Let $t=\f{g}$ and $\beta^\pp = \g{\sigma^\pp}$ be a basic assignment.

The base case is the same as in Theorem~\ref{the:ga1}, so we just show the induction step. First we prove that $\out_{\sigma^\pp}(g) = (1,0)$ implies $\f{g} \in \winppa$. We assume $\out_{\sigma^\pp}(g) = (1,0)$. Then $g=\nlf^s((\psi_1,\ldots,\psi_n))$, so we have that $t=\vee^s(\counter^s(u_1^s,\f{\psi_1}), \ldots, \counter^s(u_n^s,\f{\psi_n}))$. Now assume $s=\pp$. Then there exists a $k$ such that $\psi_k = \sigma^\pp(g)$. Therefore $\out_{\psi_k} = (1,0)$. By induction hypothesis, $\f{\psi_k} \in \winpp$. Since we have $s=\pp$, $\tau(\psi_k) = \po$ by the definition of a game. Then $\tau(\f{\psi_k}) = \po$ by definition of $\f{\cdot}$, so $\sem{\f{\psi_k}} = \false$ by Definition~\ref{def:winning_sets} and $\sem{u_k^\pp} = \true$ by definition of $g$. Therefore $\sem{\counter^\pp(u_k^\pp,\f{\psi_k})} = \true$, and therefore $\sem{t} = \sem{\vee^s(\counter^s(u_1^s,\f{\psi_1}), \ldots, \counter^s(u_n^s,\f{\psi_n})} = \true$. Since we have $\tau(t) = \tau(g) = \pp$, $t \in \winpp$.

Now assume $s = \po$. Assume $1 \leq k \leq n$. Then $\out_{\psi_k} = (0,1)$, so $\f{\psi_k} \in \winpp$ by induction hypothesis. Furthermore, by $s = \po$,  $\tau(\psi_k) = \pp$ by the definition of a game. Then $\tau(\f{\psi_k}) = \pp$ by definition of $\f{\cdot}$, so $\sem{\f{\psi_k}} = \true$ by Definition~\ref{def:winning_sets}. Therefore $\sem{\counter^\pp(u_k^\pp,\f{\psi_k})} = \false$, so we have that $\sem{t} = \sem{\vee^\po(\counter^\po(u_1^\po,\f{\psi_1}), \ldots, \counter^\po(u_n^\po,\f{\psi_n}))} = \false$. Because $\tau(t) = \tau(g) = \po$, $t \not\in \winoo$ and therefore $t \in \winpp$.

We can prove in a similar way that $\out_{\sigma^\pp}(g) \neq (1,0)$ implies $\f{g} \not\in \winppa$, and from this we can prove that $\f{g} \in \win^\po_{\g{\beta_1^\po}}$ implies $\out_{\sigma^\po}(g) = (1,0)$ by contraposition.

\end{proof}

\begin{proof}[Corollary \ref{the:cor1}]
It holds that $\out(g) = (1,0)$ if and only if there is a strategy $\sigma^\pp$ for $\pp$ such that $\out_{\sigma^\pp} = (1,0)$. This holds if and only if $t \in \winpp$ for some $\beta^\pp$ by Theorem~\ref{the:gsadttrat}. This in turn is equivalent to $t \in \winpx$.
\end{proof}

\begin{proof}[Theorem \ref{the:adtgstrat}]
Let $g$ be $\fp{t}$. First assume that $t \in \winpp$. We prove by structural induction on the ADTerm that $\out_{\sigma^\pp}(g) = (1,0)$. We will treat the different cases according to the transformations in Definition~\ref{def:fadtgame}, which all have the form $t \mapsto g$; the first two cases are the base cases, the rest of the cases form the induction step. Note that because of $t \in\winpp$, $\tau(t) = \pp$ implies $\sem{t} = \true$ and $\tau(t) = \po$ implies $\sem{t} = \false$.

\begin{enumerate}
\item[(3a)] $v^\pp \mapsto \nlf^\po(\nlf^\pp(\lf^\po(0,1), \lf^\po(1,0)))$. Because $\sem{t} = \true$, $\beta^\pp(v) = \true$ as well, so $\sigma^\pp(\nlf^\pp(\lf^\po(1,0), \lf^\po(0,1)))=\lf^\po(1,0)$ by Definition~\ref{def:gp}. Therefore $\out_{\sigma^\pp}(g) = (1,0)$.
\item[(3b)] $v^\po \mapsto \nlf^\pp(\nlf^\po(\lf^\pp(1,0), \lf^\pp(0,1)))$. There exists a basic assignment $\beta^\po$ such that $t \not\in \win^\pp_{(\beta^\pp, \beta^\po)}$, namely when $\beta^\po(t) = \true$. Therefore $t \not\in \winpp$, which is a contradiction, so we don't need to consider this case.
\item[(3c)] $\vee^\pp(\psi_1, \ldots, \psi_n) \mapsto \nlf^\po(\nlf^\pp(\fp{\psi_1},\ldots,\fp{\psi_n}))$. Since $\sem{t} = \true$, there is a $j$ such that $\sem{\psi_j} = \true$. Let $k$ be the smallest number with this property. Then $\psi_{k} \in \winpp$, so $\sigma^\pp(\nlf^\pp(\fp{\psi_1},\ldots,\fp{\psi_n}))=\fp{\psi_{k}}$. Furthermore, by $\psi_{k} \in \winpp$, we have $\out_{\sigma^\pp}(\fp{\psi_{k}}) = (1,0)$ by induction hypothesis. Therefore $\out_{\sigma^\pp}(g) = (1,0)$. 
\item[(3d)] $\vee^\po(\psi_1, \ldots, \psi_n) \mapsto \nlf^\pp(\nlf^\po(\fp{\psi_1},\ldots,\fp{\psi_n}))$. By $t \in \winpp$ we have that $\sem{t} = \false$, so for $1 \leq k \leq n$, $\sem{\psi_k} = \false$. Furthermore, because $\tau(\psi_k) = \po$, it holds that $\psi_k \in \winpp$. By induction hypothesis, it holds that $\out_{\sigma^\pp}(\fp{\psi_k}) = (1,0)$, so $\out_{\sigma^\pp}(g) = \out_{\sigma^\pp}(\nlf^\pp(\nlf^\po(\fp{\psi_1},\ldots,\fp{\psi_n}))) = (1,0)$.
\item[(3e)] $\wedge^\pp(\psi_1, \ldots, \psi_n) \mapsto \nlf^\po( \nlf^\pp(\fp{\psi_1}), \ldots, \nlf^\pp(\fp{\psi_n}) )$. Assume $1 \leq k \leq n$. By $\sem{t} = \false$ we obtain $\sem{\psi_k} = \true$ and because $\tau(\psi_k) = \pp$, $\psi_k \in \winpp$. By induction hypothesis, $\out_{\sigma^\pp}(\fp{\psi_k}) = (1,0)$, so we have that $\out_{\sigma^\pp}(\nlf^\pp(\fp{\psi_k})) = (1,0)$ for $1 \leq k \leq n$, and therefore $\out_{\sigma^\pp}(g) = (1,0)$.
\item[(3f)] $\wedge^\po(\psi_1, \ldots, \psi_n) \mapsto \nlf^\pp( \nlf^\po(\fp{\psi_1}), \ldots, \nlf^\po(\fp{\psi_n}) )$. Because $\sem{t} = \false$, there is a number $j$ such that $\sem{\psi_j} = \false$. Let $k$ be the smallest number which this property. Then $\sigma^\pp(t)=\nlf^\po(\fp{\psi_k})$. Furthermore, by $\psi_{k} \in \winpp$, we have $\out_{\sigma^\pp}(\fp{\psi_{k}}) = (1,0)$ by induction hypothesis. Therefore $\out_{\sigma^\pp}(w_{k})) = (1,0)$ for $1 \leq k \leq n$, so $\out_{\sigma^\pp}(g) = (1,0)$.
\item[(3g)] $\counter^\pp(\psi_1, \psi_2) \mapsto \nlf^\po( \nlf^\pp(\fp{\psi_1}), \fp{\psi_2} )$. By $\sem{t} = \true$, we obtain $\sem{\psi_1} = \true$ and $\sem{\psi_2} = \false$. Because we have $\tau(\psi_1) = \pp$ and $\tau(\psi_2) = \po$, both $\psi_1 \in \winpp$ and $\psi_2 \in \winpp$. Then by induction hypothesis, $\out_{\sigma^\pp}(\fp{\psi_1}) = \out_{\sigma^\pp}(\fp{\psi_2}) = (1,0)$. Then $\out_{\sigma^\pp}(\nlf^\pp(\fp{\psi_1})) =size (1,0)$ as well, and therefore $\out_{\sigma^\pp}(g) = (1,0)$.
\item[(3h)] $\counter^\po(\psi_1, \psi_2) \mapsto \nlf^\pp( \nlf^\po(\fp{\psi_1}), \fp{\psi_2} )$. Because we have $\sem{t} = \false$, either $\sem{\psi_1} = \false$ or $\sem{\psi_2} = \true$. In the first case, $\sigma^\pp(t)=\fp{\psi_1}$. Moreover, by $\tau(\psi_1) = \po$, $\psi_1 \in \winpp$, so $\out_{\sigma^\pp}(\fp{\psi_1}) = (1,0)$ by induction hypothesis. Then $\out_{\sigma^\pp}(t) = (1,0)$ as well. In the second case, $\sigma^\pp(t)=\nlf^\po(\fp{\psi_1})$. Furthermore, by $\tau(\psi_2) = \pp$, we have $\psi_2 \in \winpp$, which implies $\out_{\sigma^\pp}(\fp{\psi_2}) = (1,0)$ by induction hypothesis. Then $\out_{\sigma^\pp}(g) = \out_{\sigma^\pp}(\nlf^\po(\fp{\psi_1})) = (1,0)$.
\end{enumerate}

In a similar way, we can prove that $t \not\in \winpp$ implies $\out_{\sigma^\pp}(g) \neq (1,0)$, which gives us by contraposition that $\out_{\sigma^\pp}(g) = (1,0)$ implies $t \in \winpp$.
\end{proof}

\begin{proof}[Corollary \ref{the:cor2}]
It holds that $t \in \winpx$ if and only if $t \in \winpp$ for some $\beta^\pp$. This holds if and only if there is a strategy $\sigma^\pp$ for $\pp$ such that $\out_{\sigma^\pp} = (1,0)$ by Theorem~\ref{the:adtgstrat}. This in turn is equivalent to $\out(g) = (1,0)$.
\end{proof}

\end{document}